\definecolor{DarkGreen}{rgb}{0.1,0.5,0.1}
\definecolor{DarkRed}{rgb}{0.5,0.1,0.1}
\definecolor{DarkBlue}{rgb}{0.1,0.1,0.5}
\newtheorem{theorem}{Theorem}[section]
\newtheorem{lemma}[theorem]{Lemma}
\theoremstyle{definition}
\newtheorem{example}[theorem]{Example}
\numberwithin{equation}{section}
\DeclareMathOperator{\wt}{wt} 
\def\floor#1{{\left \lfloor {#1} \right \rfloor  }}
\def\>{\rangle} 
\def\<{\langle}
\begin{document}

\title{Permutation-invariant qudit codes from polynomials} 
\author{Yingkai Ouyang  
\footnote{yingkai\_ouyang@sutd.edu.sg 
\newline 
{\em 2010 Mathematics Subject Classification.}  Primary 81P70, 94B25; Secondary 05A10, 94B60.
\newline
{\em Key words and phrases}. Quantum coding, combinatorial codes.}
} 
\date{
Singapore University of Technology and Design, \\ 
8 Somapah Road, Singapore 487372.\\
Centre for Quantum Technologies, National University of Singapore\\ 
3 Science Drive 2, Singapore 117543.\\
\today}

\maketitle

\begin{abstract}
A permutation-invariant quantum code on $N$ qudits is any subspace stabilized by the matrix representation of the symmetric group $S_N$ as permutation matrices that permute the underlying $N$ subsystems.
When each subsystem is a complex Euclidean space of dimension $q \ge 2$, 
any permutation-invariant code is a subspace of the symmetric subspace of $(\mathbb C^q)^N.$
We give an algebraic construction of new families of of $d$-dimensional permutation-invariant codes
on at least $(2t+1)^2(d-1)$ qudits that can also correct $t$ errors for $d \ge 2$.
The construction of our codes relies on a real polynomial with multiple roots at the roots of unity, 
and a sequence of $q-1$ real polynomials that satisfy some combinatorial constraints.
When $N > (2t+1)^2(d-1)$, we prove constructively that an uncountable number of such codes exist.
\end{abstract}   

\maketitle

\section{Introduction} \label{sec:intro}  

The theory of quantum information and linear algebra is intimately related.
The fundamental entities in quantum information theory are quantum states and quantum processes, and they both can be defined in the language of linear algebra.
States on physical systems can be represented either as vectors of unit norm or positive semidefinite linear operators of unit trace on a complex Hilbert space,
while quantum processes are linear maps from quantum states to quantum states.
In this paper, we only need to consider $N$-qudit quantum states with Hilbert space $\mathcal H = (\mathbb C^q)^{ \otimes N}$ with $q\ge 2$, and quantum processes that map $N$-qudit states to $N$-qudit states,
where the Hilbert space of a single qudit is $\mathbb C^q$.
Here, $N$ is the number of qudits in the physical system that we consider.
When $q=2$ and $q=3$, we say that there are $N$ qubits and $N$ qutrits in the physical system respectively.

In quantum information theory, a quantum state is either a pure state or a probabilistic ensemble of pure states known as a mixed state, with an associated complex Hilbert space $\mathcal H$.
A pure state is a vector in $\mathcal H$ with unit norm;
a mixed state is a positive semidefinite operator with unit trace in $L(\mathcal H)$.
In the spectral decomposition of a mixed state $\rho = \sum_{i} p_i \psi_i \psi_i^*$, its eigenvectors of unit norm $\psi_i$ and eigenvalues $p_i$ correspond to the underlying pure states and their associated probabilities respectively.
A quantum state represented as a linear operator is known as a density operator, 
and we denote the set of all density operators on $\mathcal H$ as $\mathfrak D(\mathcal H)$. The set $\mathfrak D(\mathcal H)$ is isomorphic to the set of all positive semi-definite operators in $L(\mathcal H)$ with unit trace,
and this allows one to have a purely linear algebraic interpretation of any quantum state.

In this paper, we represent a quantum process with a quantum channel, which maps a density operator to a density operator.
The theory of quantum channels is well-studied, and the characterization of quantum channels as a completely positive and trace-preserving maps dates back to the work of Choi and others.
Namely, a linear map $\Phi :  L(\mathcal H) \to L(\mathcal H)$ is a quantum channel if and only if 
it is completely positive and trace-preserving (CPT).
It is often convenient to utilize the non-unique Kraus representation of a quantum channel \cite{HeK69,HeK70,kraus}. 
Namely, for any quantum channel $\Phi : L(\mathcal H) \to L(\mathcal H)$, 
there exist linear operators $A_i \in L(\mathcal H)$ such that for every $\rho \in L(\mathcal H)$, 
\[\Phi(\rho) = \sum_{i} A_i \rho A_i^*\] 
and  $\sum_{i} A_i^* A_i $ is the identity operator on $\mathcal H$.
The linear operators $A_i$ are known as Kraus operators of $\Phi$.

The theory of quantum error correction is a subfield of quantum information theory where the robustness of certain families of quantum states under certain families of quantum processes is studied.
The fundamental object in quantum error correction is the quantum code $\mathcal C$, 
which is a $d$-dimensional subspace of $\mathcal H$. 
A possible goal in quantum error correction can be to find quantum codes $\mathcal C$ for a fixed quantum channel $\mathcal N$.
When there exists a quantum channel $\mathcal R$ 
such that for every density operator $\rho$ supported on $\mathcal C$, 
\begin{align}
\mathcal R( \mathcal N (\rho)) = \rho \label{eq:perfect-qecc},
\end{align}
we say that perfect quantum error correction is possible using the quantum code $\mathcal C$ with respect to the channel $\mathcal N$.

The necessary and sufficient conditions for perfect quantum error correction was originally proved by Knill and Laflamme \cite{KnL97}. We now rephrase this foundational result in quantum error correction using the language of linear algebra. 
\begin{theorem}[Knill-Laflamme \cite{KnL97}]
\label{thm:KL} 
Let $\mathcal H$ be a complex Euclidean space, and let $\mathcal C$ be a $d$-dimensional subspace of $\mathcal H$ with orthonormal basis vectors $\psi_1, \dots, \psi_d$.
Let $\mathcal N$ be a quantum channel with Kraus operators $A_i$. 
Suppose that for all $i,j$ there exist complex numbers $g_{i,j}$ such that the following conditions hold.
\begin{enumerate}
\item Orthogonality conditions:  $\psi_a^* A_i ^* A_j \psi_b = 0$ for all $a \neq b$.
\item Non-deformation conditions: $\psi_a^* A_i ^* A_j \psi_a = g_{i,j}$ for all $a = 1,\dots, d$.
\end{enumerate}
Then for every density operator $\rho$ supported on $\mathcal C$, there exists a quantum channel $\mathcal R$ such that 
$\mathcal R (\mathcal N(\rho))  =\rho $.
Moreover, $\mathcal R(\rho) = \sum_i R_i \rho R_i ^*$ where
$R_i = P U_i^*$
where $P= \sum_{a} \psi_a \psi_a^*$ is the projector onto $\mathcal C$, 
and $U_i$ is the unitary in the polar decomposition of $A_i P = U_i \sqrt{P A_i^* A_i P}$.
\end{theorem}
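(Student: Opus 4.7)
The plan is to combine the two hypotheses into the single identity $PA_i^*A_jP = g_{i,j}P$, where $P = \sum_a \psi_a\psi_a^*$ is the projector onto $\mathcal C$. Condition~1 kills every off-diagonal block $\psi_a\psi_b^*$ and condition~2 ensures that each diagonal block scales by the common constant $g_{i,j}$, so together they force $PA_i^*A_jP$ to be a scalar multiple of $P$. From this one reads off that the matrix $G = (g_{i,j})$ is Hermitian positive semidefinite, since $\sum_{i,j}\overline{x_i}x_j g_{i,j}P = P(\sum_i x_iA_i)^*(\sum_i x_iA_i)P \succeq 0$ for every complex vector $x$, and that $\operatorname{tr}(G) = 1$ by applying the trace-preservation identity $\sum_i A_i^*A_i = I$ to both sides of $P\cdot P = P$.

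The next step exploits the non-uniqueness of Kraus representations. Choose a unitary $V$ with $VGV^* = \operatorname{diag}(d_1,\dots,d_m)$ and set $B_i = \sum_j V_{i,j}A_j$, so that $\{B_i\}$ is an alternative Kraus decomposition of $\mathcal N$ obeying the simpler relation $PB_i^*B_jP = d_i\delta_{i,j}P$. Because $PB_i^*B_iP = d_iP$ is a scalar multiple of a projector, its principal square root is $\sqrt{d_i}\,P$, and the polar decomposition reduces to $B_iP = \sqrt{d_i}\,U_iP$. Taking adjoints yields $PU_i^* = PB_i^*/\sqrt{d_i}$ whenever $d_i > 0$, so the proposed recovery Kraus operators $R_i = PU_i^*$ satisfy $R_iB_jP = PB_i^*B_jP/\sqrt{d_i} = \sqrt{d_i}\,\delta_{i,j}P$.

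Combining everything, for any $\rho = P\rho P$ supported on $\mathcal C$ one computes
\[
 \mathcal R(\mathcal N(\rho)) = \sum_{i,j,k} R_i B_j \rho B_k^* R_i^* = \sum_{i} d_i \rho = \rho,
\]
using $\sum_i d_i = \operatorname{tr}(G) = 1$ in the final equality. To promote $\mathcal R$ to a bona fide quantum channel on all of $\mathfrak D(\mathcal H)$, one appends a single extra Kraus operator supported on the orthogonal complement of $\sum_i R_i^*R_i$; this extension is always possible because $\sum_i R_i^*R_i = \sum_i U_iPU_i^* \preceq I$.

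The main obstacle I anticipate is a clean handling of the degenerate indices where $d_i = 0$: then $B_iP = 0$, the polar decomposition does not fix the partial isometry $U_i$, and one must verify that such indices contribute nothing to the recovery sum so that any extension of $U_i$ to an isometry (or a unitary) is acceptable. A secondary, bookkeeping-style difficulty is the bridge between the original Kraus operators $A_i$ named in the statement and the diagonalized $B_i$ used in the construction; this amounts to recording that the $U_i$ and $R_i$ produced from the polar decomposition of $A_iP$ differ from those for $B_iP$ only by the unitary relabeling $V$, and hence yield the same recovery channel.
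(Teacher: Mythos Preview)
The paper does not actually prove this theorem: it is stated as the Knill--Laflamme result and attributed to \cite{KnL97}, with no accompanying argument. So there is nothing in the paper to compare your attempt against.

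That said, your proposal is the standard proof from the original source and is essentially correct. A small slip: in your final computation you write $\mathcal R(\mathcal N(\rho)) = \sum_{i,j,k} R_i B_j \rho B_k^* R_i^*$, but $\mathcal N(\rho) = \sum_j B_j \rho B_j^*$ has only a single Kraus index, so the sum should be $\sum_{i,j} R_i B_j \rho B_j^* R_i^*$. (The spurious cross terms happen to vanish anyway by the same orthogonality $R_i B_k P = \sqrt{d_i}\,\delta_{i,k} P$, so the conclusion survives, but the displayed expression is not the composition of the two channels.) Your two anticipated obstacles are genuine but you have already said exactly how to dispatch them: when $d_i = 0$ one has $B_i P = 0$, so any unitary completion of $U_i$ yields $R_i B_j P = P U_i^* B_j P$ with the $j=i$ term still zero, and the unitary mixing $V$ between the $A_i$ and the $B_i$ only relabels the recovery Kraus operators without changing $\mathcal R$ itself.
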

Because of the tensor product structure of $\mathcal H = (\mathbb C_q)^{\otimes N}$, 
any linear operator $A \in L(\mathcal H)$ admits the decomposition
$A = \sum_{j}A_{j,1} \otimes \dots \otimes A_{j,N}$. 
Let $I_q \in L(\mathbb C^q)$ denote a size $q$ identity matrix.
Then, using the decomposition of $A$, we can calculate the weight of $A$, which we define as 
\[
\wt(A) =N- \max_{S \subseteq \{1,\dots, N\}}\{ |S| : A_{j,k} = I_q \}.
\]
The weight of $A$ is the number of qudits on which $A$ acts non-trivially.
When the weight of every Kraus operator of a quantum channel is zero, the quantum channel is the identity channel.
We say that a quantum channel introduces $t$ errors if (1) the quantum channel can have Kraus operators with maximum weight t and (2) the quantum channel cannot have Kraus operators with maximum weight strictly less than $t$.  
The quantum code $\mathcal C$ corrects $t$ errors if perfect quantum error correction is possible for the quantum code $\mathcal C$ with respect to all quantum channels that introduce up to $t$ errors.
The distance of a quantum code $\mathcal C$ with basis vectors $\psi_1,\dots, \psi_d$ is the smallest integer $D$, such that for every linear operator $A \in L(\mathcal H)$ with weight strictly less than $D$,
we have (i) $\psi_a^* A \psi_b = 0$ for all $a \neq b$
and (ii) $\psi_1^* A \psi_1 = \dots = \psi_d^* A \psi_d $.
To show that a quantum code corrects $t$ errors, it suffices to show that the quantum code has a distance of at least $2t+1$.

In this paper, we are interested in quantum codes that not only correct $t$ errors, but are also invariant under the permutation of any of the underlying subsystems. 
To make precise this notion of permutation-invariance, 
note that $\mathcal H = (\mathbb C^q)^{\otimes N}$ has a tensor product structure,
and its standard basis vectors are 
\[
{\bf e}_{a_1}\otimes \dots \otimes {\bf e}_{a_N},
\]
where $a_j=1,\dots,q$ for all $j=1,\dots , N$.
and ${\bf e}_1,\dots,{\bf e}_q$ are standard basis column vectors of $\mathbb C^q$.
Consider a representation $\varphi : S_N \to L(\mathcal H)$ of the symmetric group $S_N$ such that for all $P\in S_N$, 
\begin{align}
\varphi(P) ({\bf e}_{a_1}\otimes \dots \otimes {\bf e}_{a_N})
=
{\bf e}_{P(a_1)}\otimes \dots \otimes {\bf e}_{P(a_N)}.
\end{align}
A quantum code $\mathcal C$ is permutation-invariant if for every $P \in S_N$ and every $\psi \in \mathcal C$,
\begin{align}
\varphi ( P ) \psi = \psi.
\end{align}
The stabilizer of a quantum code $\mathcal C$ is the maximal subset $S \subseteq L(\mathcal H)$ such that for all $\sigma \in S$
and $\psi \in \mathcal C$, we have
$\sigma \psi = \psi$.
Additive quantum codes are the quantum codes for which their stabilizer is abelian. 
Since $S_N$ is non-abelian for $N \ge 3$, permutation-invariant quantum codes are necessarily non-additive codes for $N \ge 3.$
The theory of additive quantum codes has benefited much from its one-to-one correspondence to theory of finite fields \cite{calderbank1998quantum,Rai99NQC}, but such a direct connection is unlikely to exist in the case of non-additive quantum codes such as permutation-invariant quantum codes.

In this paper, we give an algebraic construction of new families of of $d$-dimensional permutation-invariant codes
on at least $(2t+1)^2(d-1)$ qudits that can also correct $t$ errors,
for any $q \ge 2$.
Our codes are also non-additive and non-binary quantum codes,
and their construction relies on a real polynomial $f(x)$ with multiple roots at the roots of unity, 
and a sequence of $q-1$ real polynomials $p_1(z), \dots, p_q(z)$ that satisfy some combinatorial constraints.
When $N > (2t+1)^2(d-1)$, we prove constructively that an uncountable number of such codes exist.

   The physical motivation of permutation-invariant codes arises from the complete immunity of quantum information stored in such codes from numerous physical processes that arise for example in many electron systems \cite{ouyang2014permutation,Rus00}. 
 Such physical processes can be described with {\em quantum permutation channels}, where every Kraus operator of such channels is a power series of a linear combination of $\varphi(P)$ for $P \in S_N$ \cite{ouyang2015permutation}.
When the quantum channel afflicting the physical system can be written as the composition of a quantum permutation channel with a quantum channel that introduces $t$ errors,
the effective quantum channel is just a quantum channel that introduces $t$ errors.  
It is in this case that permutation-invariant codes \cite{Rus00,PoR04,ouyang2014permutation,ouyang2015permutation}
 are the most natural candidates to encode quantum information.

The outline of this rest of this paper is as follows. 
In Section \ref{sec:prelims}, we introduce terminology related to ordered partitions, permutation-invariant sets, Dicke states, and quantum error correction.
In Section \ref{sec:generating-function},
we introduce a few lemmas that relate generating functions to some combinatorial identities.
In Section \ref{sec:Dicke-states-poly},
we prove that the trace of a particular rank one projector on the symmetric subspace with a linear operator can be interpreted as a polynomial.
In Section \ref{sec:poly-constructions},
we explain how the coefficients of the polynomial $f(x) = \sum_{z=0}^n f_z x^z$ and the values of the polynomials $p_1(z), \dots, p_q(z)$ for $z=0,\dots, n$ relate to our code construction in 
  Theorem \ref{thm:typeA} which constructs permutation-invariant codes encoding a single qubit and Theorem \ref{thm:typeB} which constructs permutation-invariant codes encoding a $d$-level system. Explicit examples of permutation-invariant codes that follow from these two theorems are supplied in Section \ref{sec:examples}. In Section \ref{sec:poly-constructions} we also prove that there is an uncountable number of permutation-invariant quantum codes that correct $t$ errors whenever $N > (2t+1)^2(d-1)$ in Theorem \ref{thm:uncountable}.
  In Section \ref{sec:comparisons}, we make comparisons of our code construction with prior permutation-invariant quantum codes.
  In Section \ref{sec:conclude},
  we give some concluding remarks.

 %%%%%%%%%%%%%%%%
\section{Preliminaries} \label{sec:prelims}
 %%%%%%%%%%%%%%%%
 
%%%%%%%%%%%%%%%%
 \subsection{The Dirac bra-ket notation}
 %%%%%%%%%%%%%%%%
 Let $\mathcal K$ be a finite dimension complex Euclidean space.
 Every vector $\phi \in \mathcal K$ can be 
 written in the Dirac bra-ket notation as $|\phi\>$, which is called a {\em ket}.
 The adjoint of $\phi$ is denoted in the Dirac bra-ket notation is $\<\phi|$, and is called a {\em bra}.
 In this paper, in accordance with existing convention in quantum information theory, we use the Dirac bra-ket notation to 
 denote the basis vectors of our permutation-invariant code, 
 and also the basis vectors of the permutation-invariant space of $(\mathbb C^q) ^{\otimes N}$.

%%%%%%%%%%%%%%%%
 \subsection{Ordered partitions and permutation-invariant sets}
 %%%%%%%%%%%%%%%%
Let $\mathbb N = \{0, 1 ,\dots, \} $ denote the set of the non-negative integers. 
We say that a vector ${\bf n} = (n_1,\dots, n_{q})$ is 
an {\em ordered partition} of a positive integer $N$ into $q$ parts 
if $n_1,\dots, n_{q} \in \mathbb N$ and $n_1 + \dots +n_{q} = N$.
Here, $n_i$ counts the number of $i$'s that appears in the integer partition of $N$.
We denote the set of ordered partitions of $N$ into $q$ parts as $T_{N,q}$.

For every $N$-tuple ${\bf c} =(c_1, \dots, c_N)  \in \mathbb \{1,\dots, q\}^N$ and $k \in \mathbb N$, 
let $\wt_k({\bf c}) = | \{ i : c_i = k \} |$ count the number of components of {\bf c} that are equal to $k$,
and let $\vec \wt_q({\bf c}) =  (\wt_1({\bf c}) , \dots, \wt_{q}({\bf c})  ) $.
We denote the multinomial coefficient that counts the number of $N$-tuples {\bf c} in $\mathbb \{1,\dots, q\}^N$ for which 
$\vec \wt_q({\bf c}) = (n_1,\dots, n_{q})$ as 
\[\binom{N}{{\bf n}} =  \frac{N!}{n_1 ! \dots n_{q}!}.\]
For every ordered partition ${\bf n}$ of $N$ into $q$ parts,
we define the permutation-invariant set of type {\bf n} as
\[C_{\bf n} = \{ {\bf c} \in \mathbb \{1,\dots, q\}^{N} : 
 \vec \wt_q({\bf c}) =  {\bf n} )  \} .\]
Given $N$-tuples ${\bf x}, {\bf y} \in \mathbb  N^N$, we denote the Hamming distance between ${\bf x}$ and ${\bf y}$ as 
\[d_H({\bf x}, {\bf y})= |\{ 1\le i \le N: x_i \neq y_i\} |.\]
Given that ${\bf n}$ and ${\bf u}$ are ordered partitions of $N$ into $q$ parts, 
we correspondingly denote the minimum Hamming distance between the elements of $C_{\bf n}$ and the elements of $C_{\bf u}$ as
\begin{align}
\Delta({\bf n},{\bf u}) = \min \{ d_H({\bf x}, {\bf y}) : {\bf x} \in  C_{\bf n}, {\bf y} \in  C_{\bf u} \}.
\end{align}
For all subsets $T$ of all ordered partitions $T \subset T_{N,q}$, we denote \begin{align}
\Delta(T) = \min \{\Delta({\bf n},{\bf u})  : {\bf n},{\bf u} \in T, {\bf n} \neq {\bf u} \}
\end{align}
as the minimum distance between the distinct permutation-invariant sets induced by $T$.
   
 %%%%%%%%%%%%%%%%
 \subsection{Dicke states}
 %%%%%%%%%%%%%%%%
The permutation-invariant codes of this paper 
are expressed as linear combinations of Dicke states, which we proceed to describe.
Given any ordered partition ${\bf n}\in T_{N,q}$, 
we use the Dirac bra-ket notation to denote the Dicke state of type {\bf n} and the unnormalized Dicke state of type {\bf n} respectively as 
\begin{align}
|D_{\bf n}\>  
&=  {\sqrt{|C_{\bf n}|^{-1} }}\sum_{{\bf c } = (c_1 ,\dots, c_N) \in C_{\bf n} } 
 {\bf e}_{c_1} \otimes \dots \otimes {\bf e}_{c_N}  
 \notag\\ 
&=  \frac{1}{\sqrt{\binom{N}{\bf n} }}\sum_{{\bf c }= (c_1 ,\dots, c_N) \in C_{\bf n} }
 {\bf e}_{c_1} \otimes \dots \otimes {\bf e}_{c_N}    ,
\end{align}
and 
\begin{align}
|H_{\bf n}\>  = \sum_{{\bf c } =(c_1 , \dots, c_N)\in C_{\bf n} } 
{\bf e}_{c_1} \otimes \dots \otimes {\bf e}_{c_N}  
 .
\end{align}
The set of Dicke states $\{ | D_{{\bf n}} \> : {\bf n} \in T_{N,q}\}$ is an orthonormal basis of the symmetric subspace of $N$ qudits.
This means that any permutation-invariant quantum state is a linear combination of these Dicke states 
$| D_{{\bf n}} \>$.
When each qudit has a dimension of $q$, the dimension of the symmetric subspace of $N$ qudits is $\binom {N+ q -1}{q-1}$, 
which is the number of ways to assign $q$ colors to $N$ unordered balls.

 %%%%%%%%%%%%%%%%%%%
\section{The generating function $f(x)$} \label{sec:generating-function}
 %%%%%%%%%%%%%%%%%%%
  
 We give a summation identity that is related to the coefficients of the polynomial $f(x)= \sum_{z=0}^n f_z x^z$ by interpreting $f(x)$ as an appropriate generating function in the following lemma.
%%%%%%%%%%%%%%%%% BEGIN LEMMA 
\begin{lemma} \label{lem:cute-identity}
Let $\omega = e^{2 \pi i / d}$ where $d$ is a positive integer and $d \ge 2$.
Let $m$ be a positive integer, $g(x)$ be a polynomial of degree $\gamma$, and $n= \gamma+m (d-1)$.
Let $f(x) = (1+x+ \dots  + x^{d-1})^m g(x) = \sum_{z=0}^{n} f_z x^z$.
Then for every $k = 1,\dots , d-1$,
\begin{align}
\sum_{z=0}^n  f_z \omega^{k z} &=
f_0 + f_1 \omega^k + \dots + f_n \omega^{k n} =   0. \label{eq:combi-first}
\end{align}
Moreover for every $c= 1, \dots ,   m-1$
 and $k = 1,\dots , d-1$,
\begin{align}
\sum_{z=0}^n  f_z \omega^{k z} z^{c} &=  f_1 \omega^{k} 1^c+ \dots + f_n \omega^{k n } n^c =   0.\label{eq:combi-second}
\end{align}
\end{lemma}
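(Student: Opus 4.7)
The plan is to exploit the fact that $1+x+\dots+x^{d-1} = \prod_{k=1}^{d-1}(x-\omega^k)$, so each $\omega^k$ for $k=1,\dots,d-1$ is a root of the cyclotomic-like factor of $f(x)$. Since this factor appears to the $m$-th power in $f(x)$, each $\omega^k$ is a root of $f(x)$ of multiplicity at least $m$, which means
\begin{align}
f(\omega^k) = f'(\omega^k) = f''(\omega^k) = \dots = f^{(m-1)}(\omega^k) = 0.
\end{align}
Taking $j=0$ already gives identity \eqref{eq:combi-first}, since $f(\omega^k) = \sum_{z=0}^n f_z \omega^{kz}$.

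For \eqref{eq:combi-second}, the key idea is to recognize the sum as the evaluation of a differential operator. Let $\theta = x\frac{d}{dx}$; applying it once to $f(x)=\sum_z f_z x^z$ multiplies each term by $z$, and so
\begin{align}
\theta^c f(x) = \sum_{z=0}^n f_z z^c x^z,
\end{align}
which at $x=\omega^k$ is exactly the left-hand side of \eqref{eq:combi-second}. The standard expansion of $\theta^c$ in terms of ordinary derivatives uses Stirling numbers of the second kind $S(c,j)$:
\begin{align}
\theta^c = \sum_{j=0}^{c} S(c,j)\, x^j \frac{d^j}{dx^j}.
\end{align}

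Now substitute $x=\omega^k$. Each summand on the right is a scalar multiple of $f^{(j)}(\omega^k)$ for some $0\le j\le c$. Because we restrict to $c\le m-1$, every such $j$ satisfies $j\le m-1$, and by the multiplicity statement above each $f^{(j)}(\omega^k)$ vanishes. Hence $\theta^c f(\omega^k)=0$, proving \eqref{eq:combi-second}.

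The only step that requires any care is the identification $\theta^c f(x)=\sum_z f_z z^c x^z$ together with the Stirling-number expansion of $\theta^c$, but both are entirely standard. There is no real obstacle here; the content of the lemma is simply the observation that an order-$m$ zero of $f$ at $\omega^k$ forces the vanishing of all combinations $\sum_z f_z z^c \omega^{kz}$ for $c<m$, once one reads these as values of $\theta^c f$ at $\omega^k$.
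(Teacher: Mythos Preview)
Your proposal is correct and takes essentially the same approach as the paper: both observe that $\omega^k$ is a root of $f$ of multiplicity at least $m$ so that $f^{(j)}(\omega^k)=0$ for $0\le j\le m-1$, and both pass from the resulting vanishing of $\sum_z f_z (z)_j \omega^{kz}$ to the vanishing of $\sum_z f_z z^c \omega^{kz}$ via Stirling numbers of the second kind. The only cosmetic difference is that you package this with the Euler operator $\theta = x\,d/dx$ and the identity $\theta^c = \sum_{j=0}^c S(c,j)\,x^j\,d^j/dx^j$, whereas the paper multiplies $f^{(j)}(\omega^k)$ by $\omega^{kj}$ explicitly and then invokes $z^c = \sum_{j=0}^c S(c,j)(z)_j$; these are the same computation in different notation.
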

%%%%%%%%%%%%%%%%% END LEMMA 
\begin{proof} 
Since the polynomial $(1+x+\dots + x^{d-1})$ is equal to zero for every $x= \omega, \dots, \omega^{d-1}$, the polynomial $f(x)$ also evaluates to zero for every $x= \omega, \dots, \omega^{d-1}$.
 
 Let $(z)_j = \Pi_{k=1}^j (z-k+1)$ denote a falling factorial where $z$ is any real number and $j$ is any non-negative integer.
The $j$-th derivative of $f(x)$ for every $j = 1, \dots, m-1$ is
\begin{align}
\frac{ d^j}{dx^j} f(x) = 
\frac{ d^j}{dx^j}  \sum_{z=0}^n f_z x^z=
\sum_{z=0}^n f_z (z)_j x^{z-j}, \label{eq:differentiate-fx}
\end{align}
and there exists some polynomial $h(x)$ such that  
\begin{align}
\frac{ d^j }{dx^j }f(x) = (1+x + \dots + x^{d-1}) h(x). \label{eq:diff-poly}
\end{align}
Hence (\ref{eq:diff-poly}) implies that $\omega^{k j} \frac{ d^j }{dx^j } f(x)$ also evaluates to zero for $x = \omega^k$ for every $k=1, \dots, d-1$.
Hence 
\begin{align}
0 =\omega^{k j} \sum_{z=0}^n f_z (z)_j \omega^{k(z-j) }
=  \sum_{z=0}^n f_z (z)_j \omega^{kz}.
\end{align} 
For every positive integer $c$ and $j$, 
let $S(c,j)$ denote the Stirling number of the second kind \cite{van2001course} which counts the number of ways to partition the integer $c$ into exactly $j$ parts.
The monomial $z^c$ is a linear combination of the falling factorials $(z)_j$ given by $z^c= \sum_{j=0}^c S(c,j) (z)_j,$ where $(z)_0=1$.
Hence
\begin{align}
\sum_{z=0}^n f_z z^c \omega^{kz} 
&=
\sum_{z=0}^n f_z \sum_{j=0}^c S(c,j) (z)_j  \omega^{kz} \notag\\
&=
 \sum_{j=0}^c  S(c,j) \sum_{z=0}^n f_z  (z)_j  \omega^{kz}. \notag
\end{align}
Hence $0 = f(\omega^k) = \sum_{z=0}^n f_z \omega^{kz} $ for every $k = 1, \dots, d-1$,
and this proves (\ref{eq:combi-second}).
\end{proof}
Lmma \ref{lem:cute-identity} immediately implies the following identity.
  %%%%%%%%%%%%%%%%% BEGIN LEMMA 
\begin{lemma} \label{lem:cute-identity-0} 
Let $m$ be a positive integer, $g(x)$ be a polynomial of degree $\gamma$, and $n= \gamma+m (d-1)$.
Let $f(x) = (1-x)^m g(x) = \sum_{z=0}^{n} f_z x^z$.
Then $f_0 + \dots + f_n = 0$, and for every $c= 1, \dots , m-1$, we also have
\begin{align}
\sum_{z=0}^n  f_z  z^{c} &=  f_1  1^c+ \dots + f_n  n^c =   0.\notag
\end{align}
\end{lemma}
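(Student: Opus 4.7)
The plan is to mimic the proof of Lemma \ref{lem:cute-identity} essentially verbatim, with the sole change being that the roots of unity $\omega^k$ are replaced by the single point $x = 1$. Since $f(x) = (1-x)^m g(x)$, the point $x = 1$ is a root of $f$ of multiplicity at least $m$, so its first $m-1$ derivatives vanish at $x = 1$, exactly as the derivatives of the $f$ in the previous lemma vanish at $\omega^k$.

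Concretely, I would first observe that for each $j = 0, 1, \dots, m-1$, the derivative identity from equation (\ref{eq:differentiate-fx}) evaluated at $x = 1$ gives
\[
0 = \left.\frac{d^j f}{dx^j}\right|_{x=1} = \sum_{z=0}^n f_z (z)_j,
\]
where $(z)_j$ is the falling factorial, with the convention $(z)_0 = 1$. The case $j = 0$ already produces $f_0 + \dots + f_n = f(1) = 0$, which is the first conclusion of the lemma.

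For the monomial identities, I would then appeal to the same Stirling number expansion $z^c = \sum_{j=0}^c S(c,j) (z)_j$ used in Lemma \ref{lem:cute-identity}. Swapping the order of summation rewrites $\sum_z f_z z^c$ as a linear combination of the falling factorial sums $\sum_z f_z (z)_j$ for $j = 0, \dots, c$, each of which vanishes as long as $c \le m-1$. This yields the second claim.

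The main (non-)obstacle is simply recognizing the parallel: the proof of Lemma \ref{lem:cute-identity} only uses that $f$ has multiplicity-$m$ roots at $\omega^k$, a property inherited from the factor $(1 + x + \dots + x^{d-1})^m$, and the analogous property here is that $f$ has a multiplicity-$m$ root at $1$, inherited from $(1-x)^m$. So the lemma is effectively a special case of the preceding argument and no new ideas are required; the only thing worth being careful about is that the hypothesis $c \le m-1$ is exactly what keeps every index $j$ in the range where the falling factorial sum is known to vanish.
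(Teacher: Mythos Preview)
Your proof is correct, but it differs from the paper's. Rather than rerunning the derivative-and-Stirling-number argument at the point $x=1$, the paper reduces to the \emph{statement} of Lemma~\ref{lem:cute-identity} via the substitution $x\mapsto -x$: setting $h(x)=f(-x)$ with coefficients $h_z=(-1)^z f_z$, one has $h(x)=(1+x)^m g(-x)$, so Lemma~\ref{lem:cute-identity} with $d=2$ (hence $\omega=-1$) gives $\sum_z h_z(-1)^z z^c=0$, which unwinds to $\sum_z f_z z^c=0$. Your route is arguably the more transparent one, since it shows the result is really about $f$ having a root of multiplicity $m$ at $1$ and nothing else; the paper's route is a bit slicker in that it cites the previous lemma as a black box, at the cost of a slightly opaque change of variables.
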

%%%%%%%%%%%%%%%%% END LEMMA 
 \begin{proof}
Let $h(x) = f(-x)$ and $h_z = f_z (-1)^z$. Then  
 \begin{align}
f(-x) = (1+x)^m g(-x) = \sum_{z=0}^n f_z x^z (-1)^z =\sum_{z=0}^n h_z x^z = h(x) .\notag
\end{align}
Since $h(x) = (1+x)^m g(-x)$ and $g(-x)$ is just a polynomial in $x$, 
Lemma \ref{lem:cute-identity} implies that for all $c=0,\dots, m-1$,
we have $\sum_{z=0}^n h_z (-1)^z z^c = 0$. Making the substitution $h_z = f_z (-1)^z$,
we get $\sum_{z=0}^n f_z z^c = 0$, we prove the lemma.
\end{proof} 
We emphasize that in Lemma \ref{lem:cute-identity-0}, the coefficient $f_z$ multiplies with a monomial in $z$ instead of an exponential in $z$ and hence the summation is not a power series. 
We give some examples of polynomials $f(x)$ for which Lemma \ref{lem:cute-identity-0} applies.
\begin{enumerate}
\item When $f(x)=  (x-1)^5$, we have $f_5 = 1, f_4 = -5, f_3 = 10 , f_2 = -10, f_1 = 5, f_0 = -1.$ For every $j \in \{0,1,2,3,4\}$ we have
\begin{align}
(-1) 0^j + (5)1^j+(-10) 2^j + (10) 3^j  + (-5) 4^j + (1) 5^j = 0  . \notag
\end{align}
\item
When $f(x)=  (1+x)(x-1)^5$,
we have
$f_6 = 1, f_5 = -4, f_4 = 5 , f_3 = 0, f_2 = -5, f_1 = 4, f_0 = -1$.
It is easy to verify that for every $j \in \{0,1,2,3,4\}$ we have 
\begin{align}
(-1) 0^j + (4) 1^j + (-5)2^j +(5) 4^j + (-4) 5^j + (1) 6^j = 0. \notag
\end{align}
\end{enumerate} 

%%%%%%%%%%%%%%%%
\section{From Dicke states to polynomials} \label{sec:Dicke-states-poly}
%%%%%%%%%%%%%%%%
  Here we prove that the trace of a particular rank one projector on the symmetric subspace with a linear operator can be interpreted as a polynomial.
  In particular, given that ${\bf p}(z)$ is a tuple of polynomials, 
$\<D_{{\bf p}(z)}|P|D_{{\bf p}(z)}\>$ is always a polynomial in $z$ for every operator $P$ in 
$L((\mathbb C^q) ^{\otimes N})$. 
%%%%%%%%%%%%%%%%% LEMMA 
\begin{lemma} \label{lem:<D|P|D>=poly}
Let ${\bf p}(z) = (p_1(z) , \dots, p_q(z))$ be an ordered partition of the positive integer $N$ for all $z = 0, \dots , q-1$
and $p_1(z) , \dots, p_q(z)$ are polynomials of degree at most $\theta$.
Then for every matrix $P \in L((\mathbb C^q)^{\otimes N})$ of weight $w$,
$\<D_{{\bf p}(z)}|P| D_{{\bf p}(z)}\>$ is a polynomial of degree 
at most $w \theta$ in the variable $z$.
\end{lemma}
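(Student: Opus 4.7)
The plan is to expand the Dicke states in the standard basis, use the weight-$w$ structure of $P$ to localize its action, and then observe that the resulting multinomial ratios are falling factorials of the components of ${\bf n}$, which become polynomials in $z$ of the stated degree after the substitution ${\bf n} = {\bf p}(z)$.

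First, I would write out the matrix element explicitly. Since the weight of $P$ is $w$, there exists a subset $S \subseteq \{1,\dots,N\}$ with $|S| = w$ such that $P$ acts as the identity on every qudit whose index lies in $\bar S = \{1,\dots,N\}\setminus S$; in other words, $P = P_S \otimes I_{\bar S}$ for some $P_S \in L((\mathbb C^q)^{\otimes w})$. Using the definition of $|D_{\bf n}\>$ with ${\bf n} = {\bf p}(z)$, we get
\begin{align*}
\<D_{\bf n}|P|D_{\bf n}\> = \frac{1}{\binom{N}{\bf n}}\sum_{{\bf c},{\bf c}'\in C_{\bf n}} \<e_{c_1}\otimes\dots\otimes e_{c_N}|P_S\otimes I_{\bar S}|e_{c'_1}\otimes\dots\otimes e_{c'_N}\>.
\end{align*}
The factor $I_{\bar S}$ forces $c_i = c'_i$ for every $i\in \bar S$, so the surviving pairs $({\bf c},{\bf c}')$ are parametrized by a common ``background'' $r\in \{1,\dots,q\}^{N-w}$ on $\bar S$ and two $w$-tuples $a,b\in\{1,\dots,q\}^w$ on $S$.

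Next, I would regroup the sum by fixing the composition vectors $\alpha = \vec\wt_q(a)$ and $\beta = \vec\wt_q(b)$. For a given $a,b$, the membership constraint ${\bf c},{\bf c}'\in C_{\bf n}$ reduces to $\vec\wt_q(r) = {\bf n}-\alpha = {\bf n}-\beta$, forcing $\alpha = \beta$, and the number of admissible backgrounds is the multinomial $\binom{N-w}{{\bf n}-\alpha}$ (interpreted as $0$ whenever some component of ${\bf n}-\alpha$ is negative). Pulling this count out,
\begin{align*}
\<D_{\bf n}|P|D_{\bf n}\> = \sum_{a,b\,:\,\vec\wt_q(a)=\vec\wt_q(b)} \<e_a|P_S|e_b\>\,\frac{\binom{N-w}{{\bf n}-\vec\wt_q(a)}}{\binom{N}{\bf n}},
\end{align*}
where $e_a = e_{a_1}\otimes\dots\otimes e_{a_w}$. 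The key algebraic identity is that the multinomial ratio simplifies to a product of falling factorials: writing $\alpha = \vec\wt_q(a)$,
\begin{align*}
\frac{\binom{N-w}{{\bf n}-\alpha}}{\binom{N}{\bf n}} = \frac{(N-w)!}{N!}\prod_{j=1}^{q}\frac{n_j!}{(n_j-\alpha_j)!} = \frac{(N-w)!}{N!}\prod_{j=1}^{q}(n_j)_{\alpha_j}.
\end{align*}

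Finally, substitute $n_j = p_j(z)$. Each $(p_j(z))_{\alpha_j}$ is a polynomial in $z$ of degree at most $\alpha_j\theta$ since $p_j$ has degree at most $\theta$, so the whole product is a polynomial of degree at most $\theta\sum_j \alpha_j = \theta w$ because $\alpha$ is a composition of $w$. Summing over the finitely many pairs $(a,b)$, we conclude that $\<D_{{\bf p}(z)}|P|D_{{\bf p}(z)}\>$ is a polynomial in $z$ of degree at most $w\theta$. The only nonroutine step is the bookkeeping argument that yields the falling-factorial formula for the ratio of multinomials; everything else is a clean consequence of symmetry and the weight hypothesis, and I do not anticipate any real obstacle.
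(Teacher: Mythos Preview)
Your argument is correct and follows essentially the same route as the paper: localize $P$ to $w$ qudits, reduce the inner product to a sum of $z$-independent matrix elements times the multinomial ratio $\binom{N-w}{{\bf n}-\alpha}/\binom{N}{\bf n}$, rewrite that ratio as $\frac{1}{(N)_w}\prod_j (p_j(z))_{\alpha_j}$, and bound the degree by $\theta\sum_j\alpha_j=w\theta$. The only cosmetic difference is that the paper packages the first reduction via the Vandermonde-type splitting of the unnormalized Dicke state $|H_{{\bf p}(z)}\>=\sum_{{\bf a}+{\bf b}={\bf p}(z)}|H_{\bf a}\>\otimes|H_{\bf b}\>$ (and then uses permutation invariance to take $S=\{1,\dots,w\}$), whereas you expand directly in the computational basis; your grouping by $\alpha=\vec\wt_q(a)$ recovers exactly the paper's $\<H_{\bf a}|E|H_{\bf a}\>$ terms.
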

%%%%%%%%%%%%%%%%% END LEMMA 
%%%%%%%%%%%%%%%%% PROOF  
\begin{proof} 
 Without loss of generality, the permutation-invariance of the Dicke states $| D_{{\bf p}(z)}\>$ allows us to consider $P = E \otimes  I_{w+1,\dots,N}$ that operates non-trivially on the first $w$ qudits, where 
 $I_{w+1,\dots,N}$ is an identity matrix on the last $N-w$ qudits.
Recall that $T_{N,q}$ denotes the set of all ordered partitions of $N$ into $q$ parts.
From the Vandermonde decomposition we have
\begin{align}
| H_{ {\bf p}(z)}\> = 
\sum_{ 
	\substack {  
		{\bf a} \in T_{w,q} \\
		{\bf b} \in T_{N-w,q} \\
		{\bf a} + {\bf b} = {\bf p}(z) \\
	 }    
}  
   | H_{\bf a} \> \otimes 
 | H_{\bf b}\>,
\end{align}
where $| H_{\bf a} \>$ and $| H_{\bf b} \>$ are unnormalized Dicke states on $w$ qudits 
and $N-w$ qudits respectively. 
\begin{align}
\< H_{{\bf p}(z)} | P  | H_{{\bf p}(z)} \>
= 
\sum_{ 
	\substack {  
		{\bf a}, {\bf a}' \in T_{w,q} \\
		{\bf b}, {\bf b}' \in T_{N-w,q} \\
		{\bf a} + {\bf b} = {\bf p}(z) \\
		{\bf a}' + {\bf b}' = {\bf p}(z) \\
	 }    
}  
 \< H_{{\bf a}'} | E | H_{\bf a} \>  \< H_{{\bf b'}} | H_{\bf b} \>.
\end{align}
Since $ \< H_{{\bf b'}} | H_{\bf b} \>=  0$ whenever ${\bf b} \neq {\bf b}'$,
the above can be rewritten as
\begin{align}
\< H_{{\bf p}(z)} | P  | H_{{\bf p}(z)} \>
= 
\sum_{ 
	\substack {  
		{\bf a} \in T_{w,q} \\
		{\bf b} \in T_{N-w,q} \\
		{\bf a} + {\bf b} = {\bf p}(z) \\
	 }    
}  
 \< H_{{\bf a}} | E | H_{\bf a} \>  \< H_{{\bf b}} | H_{\bf b} \>.
\end{align}
Hence
\begin{align}
\< D_{{\bf p}(z)} | P  | D_{{\bf p}(z)} \>
&=
\< H_{{{\bf p}(z)}} | P  | H_{{{\bf p}(z)}} \> \binom {N  }  {{\bf p}(z)}^{-1} \notag\\
&= \!\!\!\! \!\!\!\!
\sum_{
	\substack {  
		{{\bf a}}  \in T_{w,q} \\
		{{\bf p}(z)} - {{\bf a}} \in T_{N-w,q} \\
	 }    
} \!\!\!\!  \!\!\!\!
 \< H_{{\bf a}} | E | H_{\bf a} \>
\binom {N-w}{ {{\bf p}(z)} - {\bf a}}  \binom {N  }{{\bf p}(z)}^{-1}. \label{eq:dicke-E-inner-product}
\end{align}
 Since $\< H_{{  {\bf a} }} | E  | H_{{ {\bf a}}} \>$ is independent of the variable $z$, 
 it suffices to show that 
every $ \binom {N-w}{ {{\bf p}(z)} - {\bf a}}  \binom {N  }{{\bf p}(z)}^{-1}$
in (\ref{eq:dicke-E-inner-product}) is a polynomial of order at most $w \theta$ in the variable $z$. 
 
For every ${\bf a} \in T_{w,q}$ and ${\bf p}(z) \in T_{N,q}$ such that
${\bf p}(z)  - {\bf a} \in T_{N-w,q}$, we have
\begin{align}
 \binom {N-w}{ {{\bf p}(z)} - {\bf a}}  \binom {N  }{{\bf p}(z)}^{-1}
&=
\frac{(N-w)!}{(p_1(z)-a_1)! \dots (p_q(z)-a_q)!}
\frac{(p_1(z))! \dots (p_q(z))!}{N!}
\notag\\ 
& = 
 \frac{ (p_1(z))_{a_1} \dots  (p_q(z))_{a_{q}}  }{ (N)_{w} }.
\end{align}
Since each $p_j(z)$ is a polynomial of order at most $\theta$, it follows that $ \binom {N-w}{ {{\bf p}(z)} - {\bf a}}  \binom {N  }{{\bf p}(z)}^{-1}$ is a polynomial of order at most $w \theta$ in the variable $z$.
\end{proof}
%%%%%%%%%%%%%%%%% END PROOF 

%%%%%%%%%%%%%%%%%%%%%%%%%%%%%%
\section{Polynomial constructions of permutation-invariant codes} \label{sec:poly-constructions}
%%%%%%%%%%%%%%%%%%%%%%%%%%%%%%
We provide two separate constructions of permutation-invariant codes that correct $t$ errors.
 In the first construction, we restrict our attention to permutation-invariant codes that encode a single qubit into $N$ qudits, with each qudit a $q$-level system.
 The construction relies only on the properties of the following polynomials.
 \begin{enumerate}
\item A degree $n$ polynomial $f(x)$ with real coefficients that has a root at $x=1$ with multiplicity at least $2t+1$.
\item A tuple of polynomials $(p_1(z), \dots , p_q(z))$ that is an ordered partition of $N$ into $q$ parts for every $z= 0,\dots, n$.
\end{enumerate}
As long as the permutation-invariant sets induced by the ordered partitions 
$(p_1(z), \dots , p_q(z))$ are separated by a minimum distance of at least $2t+1$, we can construct permutation-invariant codes on $N$ qudits using these polynomials as given in the following theorem.
%%%%%%%%%%%%%%%%% THM
\begin{theorem} \label{thm:typeA} 
Let $f(x) = (x-1)^m g(x) = \sum_{z=0}^n f_z x^z$ be a polynomial of degree $n$,
where $g(x)$ is a polynomial with real coefficients.
Let $p_1(z) , \dots, p_q(z)$ be polynomials  
such that for all $z = 0, \dots , n$, ${\bf p}(z) = (p_1(z),\dots,p_q(z))$ is an ordered partition of the positive integer $N$.
Let the logical zero and the logical one be given respectively by the states 
\begin{align}
|0_L\> &= 
\sqrt 2 (|f_0| + \dots + |f_n|)^{-1/2}
\sum_{\substack{z=0,\dots , n\\ f_z > 0}}  \sqrt{  f_z  }  
| D_{{\bf p}(z)}  \>  ,
\notag \\
|1_L\> &= 
\sqrt 2  (|f_0| + \dots + |f_n|)^{-1/2}
\sum_{\substack{z=0,\dots , n\\ f_z < 0}}  \sqrt{ -f_z }
| D_{{\bf p}(z)}  \>  \label{eq:picodes-typeA}.
\end{align} 
Suppose that $\Delta(\{ {\bf p}(z): z = 0, \dots, n \}) \ge 2t+1$ 
and that the degree of the polynomials $p_1(z) , \dots, p_q(z)$ is at most $\frac{m-1}{2t}$,
where $t$ is a positive integer.
Then $\{|0_L\> , |1_L\> \}$ is an orthonormal basis and spans a code that corrects $t$ errrors.
\end{theorem}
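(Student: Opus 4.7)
The plan is to verify both claims of the theorem: that $\{|0_L\>,|1_L\>\}$ is orthonormal, and that the two-dimensional code it spans satisfies the Knill--Laflamme conditions of Theorem \ref{thm:KL} for every error operator of weight at most $2t$. Since the latter is equivalent to the code having distance at least $2t+1$, it is sufficient for correcting $t$ errors.

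\emph{Orthonormality.} Because $\Delta(\{{\bf p}(z):z=0,\dots,n\})\ge 2t+1>0$, the ordered partitions ${\bf p}(0),\dots,{\bf p}(n)$ are pairwise distinct, so the Dicke states $|D_{{\bf p}(z)}\>$ form an orthonormal set. The supports of $|0_L\>$ and $|1_L\>$ in this set are the disjoint sign classes $\{z:f_z>0\}$ and $\{z:f_z<0\}$, so $\<0_L|1_L\>=0$ is immediate. For normalization one computes $\<0_L|0_L\>=\tfrac{2}{\sum_z|f_z|}\sum_{f_z>0}f_z$. Writing $(x-1)^m=(-1)^m(1-x)^m$ and absorbing the sign into $g$, Lemma \ref{lem:cute-identity-0} gives $\sum_{z=0}^n f_z=0$, so $\sum_{f_z>0}f_z=\tfrac{1}{2}\sum_z|f_z|$ and $\<0_L|0_L\>=1$. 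The argument for $\<1_L|1_L\>$ is identical.

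\emph{Knill--Laflamme conditions.} Fix an operator $E$ of weight $w\le 2t$. Since $E$ acts as the identity outside a set $S$ of at most $w$ qudits, $\<{\bf e}_c|E|{\bf e}_{c'}\>=0$ whenever $d_H(c,c')>w$. By the hypothesis $\Delta(\{{\bf p}(z)\})\ge 2t+1>w$, every pair of computational basis strings from $C_{{\bf p}(z)}$ and $C_{{\bf p}(z')}$ with $z\ne z'$ differs in more than $w$ positions, so $\<D_{{\bf p}(z)}|E|D_{{\bf p}(z')}\>=0$ for $z\ne z'$. This kills every cross term in the Dicke expansion of $\<0_L|E|1_L\>$, giving the orthogonality condition at once, and reduces both $\<0_L|E|0_L\>$ and $\<1_L|E|1_L\>$ to diagonal sums. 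The non-deformation condition then collapses to
\begin{align}
\<0_L|E|0_L\>-\<1_L|E|1_L\>=\frac{2}{\sum_z|f_z|}\sum_{z=0}^n f_z\,\<D_{{\bf p}(z)}|E|D_{{\bf p}(z)}\>=0.
\end{align}

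To establish the remaining identity I invoke Lemma \ref{lem:<D|P|D>=poly}: $\<D_{{\bf p}(z)}|E|D_{{\bf p}(z)}\>$ is a polynomial in $z$ of degree at most $w\theta$, where $\theta$ is the maximum degree of $p_1,\dots,p_q$. Combining $\theta\le(m-1)/(2t)$ with $w\le 2t$ shows this polynomial has degree at most $m-1$. Expanding it in the monomial basis $1,z,\dots,z^{m-1}$ and applying Lemma \ref{lem:cute-identity-0}, which yields $\sum_z f_z z^c=0$ for every $c=0,1,\dots,m-1$, finishes the proof. The only nontrivial calibration is the degree bound $w\theta\le m-1$: the multiplicity-$m$ root of $f$ at $1$ provides exactly $m$ vanishing moments of $(f_z)$, and the hypothesis on $\theta$ is tuned precisely so that no higher moments are ever required. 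Everything else is mechanical substitution into Knill--Laflamme.
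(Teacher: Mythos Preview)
Your proof is correct and follows essentially the same approach as the paper: verify the Knill--Laflamme conditions using the minimum-distance hypothesis for the off-diagonal terms and the combination of Lemma~\ref{lem:<D|P|D>=poly} (polynomial degree bound) with Lemma~\ref{lem:cute-identity-0} (vanishing moments of $f_z$) for the diagonal ones. The only cosmetic difference is that you establish orthonormality directly at the outset by computing $\sum_{f_z>0}f_z=\tfrac12\sum_z|f_z|$, whereas the paper deduces $\<0_L|0_L\>=\<1_L|1_L\>$ a posteriori from the non-deformation condition with $P$ the identity; your route is slightly more explicit but not substantively different.
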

%%% END THM
Theorem \ref{thm:typeA} can be seen as a consequence of the Knill-Laflamme error correction criterion in Theorem \ref{thm:KL} and the Lemmas \ref{lem:cute-identity} and Lemma \ref{lem:<D|P|D>=poly}. 
In its proof, the orthogonality condition of Theorem \ref{thm:KL} is trivially satisfied, 
and the non-deformation condition of Theorem \ref{thm:KL} holds because of the aforementioned lemmas.
%%%%%%%
\begin{proof}[Proof of Theorem \ref{thm:typeA}]
Since $P \in L((\mathbb C^q) ^{\otimes N})$ operates non-trivially on at most $2t$ qudits and the codes $C_{{\bf p}(z)}$ have a mutual minimum distance of at least $2t+1$, the orthogonality condition of Theorem \ref{thm:KL} holds.

To show that the non-deformation conditions of Theorem \ref{thm:KL} hold, 
it suffices to show that 
\begin{align}
\<0_L| P |0_L\> =  \sqrt 2 (|f_0| + \dots + |f_n|)^{-1}
\sum_{\substack{z=0,\dots , n\\ f_z > 0}}  f_z 
\<| D_{{\bf p}(z)}|P | D_{{\bf p}(z)}  \>  ,
\end{align}
and
\begin{align}
\<1_L| P |1_L\> = \sqrt 2 (|f_0| + \dots + |f_n|)^{-1}
\sum_{\substack{z=0,\dots , n\\ f_z < 0}}  -f_z  
\<| D_{{\bf p}(z)}|P | D_{{\bf p}(z)}  \>  .
\end{align} 
Now
\begin{align}
\<0_L|P|0_L\>-\<1_L|P|1_L\> &=
\sqrt 2 \sum_{ z=0,\dots,n }  f_z   \<D_{{\bf p}(z)} |P| D_{{\bf p}(z)}\>.
\end{align} 
Lemma~\ref{lem:<D|P|D>=poly} implies that the polynomials $\< D_{{\bf p}(z)} | P | D_{{\bf p}(z)} \> $ have degree no more than $2t \theta$ in the variable $z$.
  Hence for some constants $\alpha_c \in \mathbb C$
\begin{align}
\sum_{z =0}^n  f_z   	\< D_{{\bf p}(z)} | P | D_{{\bf p}(z)} \>   
=&
\sum_{z =0}^n  f_z   	 \sum_{c=0}^{2t \theta } \alpha_c z^c \notag\\
=&
	 \sum_{c=0}^{2t \theta } \alpha_c \left( \sum_{z =0}^n  f_z  z^c \right).
	 \label{eq:non-deformation-proof}
\end{align}
The bracketed term above is zero because of Lemma \ref{lem:<D|P|D>=poly}.
Hence $\<0_L| P |0_L\> = \<1_L| P |1_L\> $ for every $P \in L((\mathbb C^q) ^{\otimes N})$ that operates non-trivially on at most $2t$ qudits.
This proves that the non-deformation condition of Theorem \ref{thm:KL} holds.

Taking $P$ to be the identity operator then implies that $ \<0_L| 0_L\>  = \<1_L|  1_L\> $ which proves that $ |0_L\>$ and $ |1_L\> $ both have unit norm. Since $ |0_L\>$ and $ |1_L\> $ are trivially orthogonal, $\{|0_L\>,|1_L\>\}$ is an orthonormal basis.
\end{proof}

 In the second construction, we construct permutation-invariant codes that encode a $d$-level system into $N$ qudits. We rely on the properties of the following polynomials.
 \begin{enumerate}
\item A degree $n$ polynomial $f(x)$ with non-negative coefficients that divides $(1+x+\dots +x^{d-1})^{2t+1}$ to yield a polynomial.
\item A tuple of polynomials $(p_1(z), \dots , p_q(z))$ that is an ordered partition of $N$ into $q$ parts for every $z= 0,\dots, n$.
\end{enumerate}

%%%%%%%%%%%%%%%%% THM
\begin{theorem} \label{thm:typeB} 
Let $f(x) = (1+x+\dots +x^{d-1})^m g(x)$ be a polynomial of degree $n$ such that $f(x)$ has non-negative coefficients.
Let $p_1(z) , \dots, p_q(z)$ be polynomials  
such that for all $z = 0, \dots , n$, ${\bf p}(z) = (p_1(z),\dots,p_q(z))$ is an ordered partition of the positive integer $N$.
For $k=0,\dots, d-1$, let
\begin{align} 
|k_L\> &= 
\sqrt{d} (f_0 + \dots + f_n)^{-1/2}
\sum_{\substack{z=0,\dots , n\\ (z-k)/d \in \mathbb N}}   \sqrt{ f_z }
| D_{{\bf p}(z)}  \> . \label{eq:picodes-typeB}
\end{align} 
Suppose that $\Delta(\{ {\bf p}(z): z = 0, \dots, n \}) \ge 2t+1$ 
and that the degree of the polynomials $p_1(z) , \dots, p_q(z)$ is at most $\frac{m-1}{2t}$,
where $t$ is a positive integer.
Then $\{|k_L\> : k =0,\dots, d-1 \}$ is an orthonormal basis and spans a code that corrects $t$ errrors.
\end{theorem}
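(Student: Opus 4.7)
The plan is to apply the Knill--Laflamme criterion (Theorem \ref{thm:KL}) following the structure of the proof of Theorem \ref{thm:typeA}, with the one essential modification that the logical states are now indexed by residues of $z$ modulo $d$, so the governing combinatorial identity is the roots-of-unity version (\ref{eq:combi-second}) of Lemma \ref{lem:cute-identity} in place of Lemma \ref{lem:cute-identity-0}. Throughout, I fix an arbitrary $P \in L((\mathbb C^q)^{\otimes N})$ of weight at most $2t$ and verify both Knill--Laflamme conditions for such $P$.

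For the orthogonality condition $\langle a_L | P | b_L \rangle = 0$ with $a \neq b$: the state $|a_L\rangle$ is supported on $|D_{\mathbf p(z)}\rangle$ only for $z \equiv a \pmod d$, and similarly for $|b_L\rangle$, so every cross-term has $z \neq z'$. By hypothesis the ordered partitions $\mathbf p(z)$ are mutually at Hamming distance at least $2t+1$, so a weight-$2t$ operator cannot connect $|D_{\mathbf p(z)}\rangle$ to $|D_{\mathbf p(z')}\rangle$ when $z \neq z'$. Taking $P = I$ in this same argument yields the inter-codeword orthogonality $\langle a_L | b_L \rangle = 0$, reducing the orthonormality check to normalization.

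The main content is the non-deformation condition. The same distance argument collapses the inner product to its diagonal:
\begin{align*}
\langle k_L | P | k_L \rangle = \frac{d}{\sum_{z=0}^n f_z}\sum_{\substack{z=0,\ldots,n \\ z \equiv k \pmod d}} f_z \, \langle D_{\mathbf p(z)}|P|D_{\mathbf p(z)}\rangle,
\end{align*}
and Lemma \ref{lem:<D|P|D>=poly} combined with the degree hypothesis $\theta \leq (m-1)/(2t)$ identifies the inner matrix element as a polynomial in $z$ of degree at most $2t\theta \leq m-1$. Thus it suffices to show that for every $c \in \{0,1,\ldots,m-1\}$ the partial sum $\sum_{z \equiv k \pmod d} f_z z^c$ does not depend on $k$. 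This is the only point where the structural hypothesis $f(x) = (1+x+\dots+x^{d-1})^m g(x)$ is used.

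The plan for this central step is a discrete Fourier decomposition: writing $\mathbf 1_{z \equiv k \,(\mathrm{mod}\, d)} = \frac 1 d \sum_{l=0}^{d-1} \omega^{l(z-k)}$ with $\omega = e^{2\pi i/d}$,
\begin{align*}
\sum_{\substack{z = 0,\ldots,n \\ z \equiv k \,(\mathrm{mod}\, d)}} f_z z^c = \frac 1 d \sum_{l=0}^{d-1} \omega^{-lk}\sum_{z=0}^n f_z z^c \omega^{lz},
\end{align*}
and then invoking Lemma \ref{lem:cute-identity} --- equation (\ref{eq:combi-first}) for $c = 0$ and (\ref{eq:combi-second}) for $1 \leq c \leq m-1$ --- to annihilate every $l \neq 0$ contribution, leaving $\frac 1 d \sum_{z=0}^n f_z z^c$, which is manifestly independent of $k$. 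Specializing to $P = I$ then gives $\langle k_L | k_L \rangle = 1$, completing the orthonormality check. I do not anticipate a genuine obstacle: the central step is purely the Fourier reduction, and the subtlety lies only in confirming that the two parts of Lemma \ref{lem:cute-identity} together cover exactly the range of exponents $\{0, 1, \ldots, m-1\}$ forced by the degree bound $2t\theta \leq m - 1$ that emerges from Lemma \ref{lem:<D|P|D>=poly}.
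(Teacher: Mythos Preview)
Your argument is correct, but it is organized dually to the paper's. You work in the computational logical basis $|k_L\rangle$, where the orthogonality condition is trivial (disjoint residue classes of $z$ plus the distance bound kill every cross term) and the non-deformation condition is the substantive part, handled by your Fourier expansion of the indicator $\mathbf 1_{z\equiv k}$ together with Lemma~\ref{lem:cute-identity}. The paper instead passes to the Fourier-transformed logical states $|\tilde k\rangle = d^{-1/2}\sum_j \omega^{jk}|j_L\rangle$; in that basis $\langle \tilde k|P|\tilde k\rangle = (\sum_z f_z)^{-1}\sum_z f_z\langle D_{\mathbf p(z)}|P|D_{\mathbf p(z)}\rangle$ has no $k$-dependence at all, so non-deformation is immediate, while orthogonality $\langle \tilde k|P|\tilde j\rangle = 0$ for $j\neq k$ becomes the step that invokes Lemma~\ref{lem:cute-identity}. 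The two routes are equivalent: your internal Fourier decomposition of the indicator is exactly the change of basis the paper performs externally on the logical states. Your version avoids introducing the auxiliary vectors $|\tilde k\rangle$; the paper's version avoids splitting the sum by residue class and makes the $k$-independence of the non-deformation side literally manifest rather than something to be checked.
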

%%% END THM

The proof of Theorem \ref{thm:typeB} is similar to the proof of Theorem \ref{thm:typeA}, which uses Theorem \ref{thm:KL} with the Lemmas \ref{lem:cute-identity-0} and Lemma \ref{lem:<D|P|D>=poly}.
 In this proof, the orthogonality condition of Theorem \ref{thm:KL} is no longer trivially satisfied and relies on the aforementioned lemmas.
 The non-deformation condition of Theorem \ref{thm:KL} on the other hand is satisfied trivially.
%%%%%%%%
\begin{proof}[Proof of Theorem \ref{thm:typeB}]
Using the Dirac bra-ket notation, we denote the discrete Fourier transform of the vectors $|k_L\>$ as 
\begin{align} 
|\tilde k\> &=
\frac{1}{\sqrt{d}} \sum_{j=0}^{d-1} \omega^{j} |j_L\>
  =  (f_0 + \dots + f_n)^{-1/2}
\sum_{z=0}^n  \omega^{k z}  \sqrt{ f_z }
| D_{{\bf p}(z)}  \>  \label{eq:picodes-typeB}.
\end{align} 
To show that 
$\{|k_L\> : k = 0, \dots, d-1\}$ is an orthonormal basis,
it suffices to prove that $\{|\tilde k\> : k = 0, \dots, d-1\}$ is an orthonormal basis since the Fourier transform is a unitary transformation.
Clearly $|\tilde k\>$ has unit norm for all $k = 0,\dots, d-1$. Hence it remains to demonstrate that $|\tilde j\>$ and $|\tilde k\>$ are orthogonal for distinct $j$ and $k$.
It suffices then to show that the more general orthogonality condition $ \<\tilde k |P|\tilde j\> =0 $ of Theorem \ref{thm:KL} holds whenever $P$ is an $N$ qudit operator that acts non-trivially on at most $2t$ qudits. 
Note that 
\begin{align}
\<\tilde k |P|\tilde j\> 
  =  (f_0 + \dots + f_n)^{-1}
\sum_{z=0}^n \sum_{y=0}^n  \omega^{-k z+j y} 
\sqrt{f_z f_y}\< D_{{\bf p}(z)} |P | D_{{\bf p}(y)}  \> . \label{eq:above}
\end{align}
Since $P$ has a weight of at most $2t$, Dicke states of distinct types do not overlap under $P$, and 
\begin{align}
\<\tilde k |P|\tilde j\> 
  =  (f_0 + \dots + f_n)^{-1}
\sum_{z=0}^n  \omega^{(j-k) z} 
f_z \< D_{{\bf p}(z)} |P | D_{{\bf p}(z)}  \> .
\end{align}
Since $\< D_{{\bf p}(z)} |P | D_{{\bf p}(z)}  \>$ is a low order polynomial,
the combinatorial identity in Lemma \ref{lem:cute-identity} implies that
$\<\tilde k |P|\tilde j\>  = 0$ whenever $j \neq k$, which is the orthogonality condition of Theorem \ref{thm:KL}. 
The non-deformation condition of Theorem \ref{thm:KL} holds trivially because 
for every $k= 0,\dots, d-1$,
\begin{align}
\<\tilde k |P|\tilde k\> 
	  =  (f_0 + \dots + f_n)^{-1}
\sum_{z=0}^n  f_z \< D_{{\bf p}(z)} |P | D_{{\bf p}(z)}  \> .
\end{align}
\end{proof}

%%%%%%%%%%%%%%
\section{Examples of permutation-invariant codes} \label{sec:examples}
  %%%%%%%%%%%%%
 
   We now supply a few examples of permutation-invariant codes where the weight distribution for the Dicke states for the permutation-invariant code is linearly shifted, and the square of the amplitudes do not follow the binomial distribution.
 \begin{example} 
 With the construction of Theorem \ref{thm:typeA}, $N=19$ qubits are used with the polynomial $f(x) =(1+x)(x-1)^5$ and the tuple of polynomials ${\bf p}(z)=( N - 1-3z, 1 + 3z)$ to obtain a permutation-invariant code encoding one qubit that corrects one arbitrary error. The orthonormal basis vectors are
 \begin{align}
  |0_L\> = \frac{\sqrt 4 |D_{(15,4)}\> + \sqrt 5 |D_{(6,13)}\> + |D _{(0,19)}\>}{\sqrt{10}} 
   ,\quad  
   |1_L\> = \frac{ |D_{(18,1)}\> +  \sqrt {5} |D_{(12,7)}\> + \sqrt 4 |D_{(3,16)}\>}{\sqrt{10}}. \notag
 \end{align}
 \end{example}
 
  \begin{example} 
   With the construction of Theorem \ref{thm:typeA}, $N= 108$ qutrits are used with the polynomial $f(x) = (1+x)(x-1)^5$ and the tuple of polynomials ${\bf p}(z) = (N-3z^2, 3z^2, 0)$
   to obtain a permutation-invariant code encoding one qubit that corrects one arbitrary error.
The orthonormal basis vectors are
\begin{align}
  |0_L\> &= \frac{
  \sqrt 4 |D_{(105,3,0)}\> + \sqrt 5 |D_{(60,48,0)}\> + |D_{(0,108,0)}\>
  }{\sqrt{10}} \notag\\
 |1_L\> &= \frac{
 |D_{(108, 0,0)} \> +  \sqrt {5} |D_{(96,12,0)}\> + \sqrt 4 |D_{(33,75,0)}]\>
 }{\sqrt{10}} .\notag
\end{align}
  \end{example}
 
     \begin{example} \label{example:d-level}
  Let $N=(2t+1)^2(d-1)$, $f(x) = (1+x+\dots +x^{d-1})^{2t+1} = \sum_{z=0}^{(2t+1)(d-1)} f_z x^z$, 
  $p_1(z) = (2t+1)z, p_2(z) = N- p_1(z),$ and $p_3(z) , \dots, p_q(z) = 0$.
The construction of Theorem \ref{thm:typeB} yields an $N$-qudit permutation invariant code that encodes a $d$-level system into $(2t+1)^2(d-1)$ qudits and can correct $t$ errors.
The orthonormal basis vectors are $\{|k_L\> , k = 0,\dots, d-1\}$ where
\begin{align}
|k_L\> &= 
\sqrt{d^{-m+1}} \sum_{\substack{z=0,\dots , (2t+1)(d-1)\\ (z-k)/d \in \mathbb N}}   \sqrt{ f_z }
| D_{{\bf p}(z)}  \> .\notag
\end{align} 
 \end{example}
We now give examples of permutation-invariant codes on qubits that correct a single error while encoding a 3-level system, a 4-level system and a 5-level system that are all based on Example \ref{example:d-level}.
  \begin{example} \label{eg:3level}
  Let $N=18$, $f(x) = (1+x+x^2)^{3} = \sum_{z=0}^{6} f_z x^z$, 
  $p_1(z) =3z, p_2(z) = N- 3z$.
The construction of Theorem \ref{thm:typeB} yields an $18$-qubit permutation invariant code that encodes a $3$-level system and corrects 1 error.
The orthonormal basis vectors are  
\begin{align}
|0_L\> &= \frac{| D_{(18,0)}\> + \sqrt{7}| D_{(9,9)}  \> + | D_{(0,18)}  \>}{3} ,\notag\\
|1_L\> &= \frac{  \sqrt{3}| D_{(15,3)}  \> + \sqrt{6} | D_{(6,12)} \>}{3},  \notag\\
|2_L\> &= \frac{  \sqrt{6}| D_{(12,6)}  \> + \sqrt{3} | D_{(3,15)}  \>}{3} 
 .\notag
\end{align} 
 \end{example}
  \begin{example} \label{eg:4level}
  Let $N=27$, $f(x) = (1+x+x^2+x^3)^{3} = \sum_{z=0}^{9} f_z x^z$, 
  $p_1(z) =3z, p_2(z) = N- 3z$.
The construction of Theorem \ref{thm:typeB} yields a $27$-qubit permutation invariant code that encodes a $4$-level system and corrects 1 error.
The orthonormal basis vectors are  
\begin{align}
|0_L\> &= \frac{| D_{(27,0)}\> + \sqrt{12}| D_{(15,12)}  \> + \sqrt{3} | D_{(3,24)}  \>}{4} ,\notag\\
|1_L\> &= \frac{  \sqrt{3}| D_{(24,3)}  \> + \sqrt{12} | D_{(12,15)} \>+ | D_{(0,27)}  \>}{4},  \notag\\
|2_L\> &= \frac{  \sqrt{6}| D_{(21,6)}  \> + \sqrt{10} | D_{(9,18)} \>}{4},  \notag\\
|3_L\> &= \frac{  \sqrt{10}| D_{(18,9)}  \> + \sqrt{6} | D_{(6,21)} \>}{4}   .\notag
\end{align} 
 \end{example}

  \begin{example} \label{eg:5level}
  Let $N=36$, $f(x) = (1+x+x^2+x^3+x^4)^{3} = \sum_{z=0}^{12} f_z x^z$, 
  $p_1(z) =3z, p_2(z) = N- 3z$.
The construction of Theorem \ref{thm:typeB} yields a $36$-qubit permutation invariant code that encodes a $5$-level system and corrects 1 error.
The orthonormal basis vectors are  
\begin{align}
|0_L\> &= \frac{| D_{(36,0)}\> + \sqrt{18}| D_{(21,15)}  \> + \sqrt{6} | D_{(6,30)}  \>}{5} ,\notag\\
|1_L\> &= \frac{  \sqrt{3}| D_{(33,3)}  \> + \sqrt{19} | D_{(18,18)} \>+\sqrt 3 | D_{(3,33)}  \>}{5},  \notag\\
|2_L\> &= \frac{  \sqrt{6}| D_{(30,6)}  \> + \sqrt{18} | D_{(15,21)} \>+ | D_{(0,36)} \>}{5},  \notag\\
|3_L\> &= \frac{  \sqrt{10}| D_{(27,9)}  \> + \sqrt{15} | D_{(12,24)} \>}{5}   ,\notag\\
|4_L\> &= \frac{  \sqrt{15}| D_{(24,12)}  \> + \sqrt{10} | D_{(9,27)} \>}{5}   .\notag
\end{align} 
 \end{example}
 At this point, we remark that the coefficients in the orthonormal basis vectors of the permutation-invariant codes supplied in Examples \ref{eg:3level}, \ref{eg:4level} and \ref{eg:5level}
 are identical to the coefficients of the orthonormal basis vectors of the binomial bosonic codes in \cite{BinomialCodes2016}.
Since the error model considered for the binomial bosonic codes is more restricted than the error model we consider, to prove that the binomial bosonic codes work, one only needs to demonstrate the orthogonality property of the Knill-Laflamme error correction criterion in Theorem \ref{thm:KL}.
In our situation, we also need to prove that the non-deformation conditions in Theorem \ref{thm:KL} hold.

We now present a result that is in the same spirit as the result on permutation-invariant quantum codes in Ref.~\cite{PoR04}, 
where an uncountable number of codes correcting a single error on 9 qubits was demonstrated, 
and we essentially rely on the construction of permutation-invariant quantum codes from Theorem \ref{thm:typeB}.
%%%%%%%%%%%%%%%%% THM
\begin{theorem}  \label{thm:uncountable}  
Let $t$ and $d$ be positive integers with $d \ge 2$, and let $N$ be an integer such that $N \ge (2t+1)^2(d-1)$. 
Then there is a permutation-invariant code on $N=(2t+1)^2(d-1)$ qudits with dimension $d$ that correct $t$ errors.
Furthermore if $N > (2t+1)^2(d-1)$,
there is an uncountable number of permutation-invariant codes on $N$ qudits with dimension $d$ that correct $t$ errors.
\end{theorem}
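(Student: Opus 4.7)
My plan is to apply Theorem~\ref{thm:typeB} with $m = 2t+1$, which constrains each partition polynomial $p_j(z)$ to have degree at most $(m-1)/(2t) = 1$ and forces $f(x) = (1+x+\cdots+x^{d-1})^{2t+1} g(x)$ to have degree $n = (2t+1)(d-1) + \deg g$. The single-code existence claim will follow from the choice $g(x) = 1$, while the uncountable family will arise from continuously varying $g(x)$.

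For the existence claim, set $f(x) = (1+x+\cdots+x^{d-1})^{2t+1}$ so that $n = (2t+1)(d-1)$, and choose the partition ${\bf p}(z) = ((2t+1)z,\, N - (2t+1)z,\, 0, \ldots, 0)$. Since $N \ge (2t+1)^2(d-1) = (2t+1)n$, the second entry is non-negative for every $z \in \{0, \ldots, n\}$, so each ${\bf p}(z)$ is a valid ordered partition of $N$ into $q$ parts; the two non-trivial components are linear in $z$, meeting the degree constraint. Using the standard identity $\Delta({\bf n},{\bf u}) = \tfrac{1}{2}\sum_j |n_j - u_j|$, the pairwise minimum Hamming distance equals $(2t+1)|z-z'| \ge 2t+1$. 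Theorem~\ref{thm:typeB} then produces a $d$-dimensional permutation-invariant code that corrects $t$ errors, establishing the first assertion.

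For the uncountable claim, introduce the one-parameter family $f_\lambda(x) = (1+x+\cdots+x^{d-1})^{2t+1}(1 + \lambda x)$ for $\lambda > 0$, whose coefficients $f_{\lambda,z} = c_z + \lambda c_{z-1}$ (where $c_z$ denotes the coefficients of $(1+x+\cdots+x^{d-1})^{2t+1}$) are all non-negative. Extend the same partition template to $z \in \{0, \ldots, n+1\}$, which is valid precisely when $(2t+1)(n+1) \le N$. For each such $\lambda$, Theorem~\ref{thm:typeB} yields a code $\mathcal{C}_\lambda$. Because $f_{\lambda,z}/f_{\lambda,z'}$ is a non-constant rational function of $\lambda$ for generic index pairs, the normalized logical vectors $|k_L^\lambda\>$ depend genuinely on $\lambda$, so the subspaces $\mathcal{C}_\lambda$ are pairwise distinct as $\lambda$ varies over an open interval, yielding uncountably many codes.

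The hard part is matching the claimed threshold $N > (2t+1)^2(d-1)$ rather than the slightly larger threshold $N \ge (2t+1)^2(d-1) + (2t+1)$ demanded by the construction above. In the narrow band $(2t+1)^2(d-1) < N < (2t+1)^2(d-1) + (2t+1)$, the linear partition $((2t+1)z,\, N-(2t+1)z,\,0,\ldots)$ cannot be extended past $z = (2t+1)(d-1)$, leaving no room to raise $\deg g$. To close this gap one would likely redistribute the residual mass $N - (2t+1)^2(d-1)$ across a third qudit level (requiring $q \ge 3$) and introduce the continuous parameter through a richer family of polynomials $g(x)$ chosen so that $f(x)$ remains in the open cone of non-negative-coefficient polynomials while keeping $(2t+1)(n+1) \le N$ satisfiable.
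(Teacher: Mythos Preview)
Your approach is essentially the paper's: apply Theorem~\ref{thm:typeB} with $m=2t+1$, take $g(x)=1$ together with the linear partition ${\bf p}(z)=\bigl((2t+1)z,\,N-(2t+1)z,\,0,\dots,0\bigr)$ for the existence claim, and then vary a degree-one $g$ to obtain a one-parameter family of codes. The only cosmetic differences are that the paper parametrizes the family by $g_\theta(x)=\cos^2\theta + x\sin^2\theta$ with $\theta\in[0,\pi/2]$ (rather than your $1+\lambda x$), and proves distinctness of the resulting subspaces by bounding the overlap $\<0_\theta|0_\phi\>$ with Cauchy--Schwarz instead of tracking coefficient ratios.

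The threshold difficulty you flag in your last paragraph is not resolved in the paper's proof. The paper simply writes ``and any choice of the $q$-tuple of polynomials ${\bf p}(z)$ such that $\Delta(\{ {\bf p}(z): z=0,\dots,n\})\ge 2t+1$'' without exhibiting such a tuple for every $N>(2t+1)^2(d-1)$; indeed, with a degree-one $g_\theta$ one has $n=(2t+1)(d-1)+1$, and the linear template you and the paper both rely on already needs $N\ge (2t+1)^2(d-1)+(2t+1)$, exactly as you observe. So your concern is legitimate, but it is a gap in the paper's argument as well, not a divergence between your proof and theirs.
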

%%% END THM       
 \begin{proof}%[Proof of Theorem \ref{thm:uncountable}]
 The first result 
 follows directly from Example \ref{example:d-level}. It remains to prove the second result.
We consider the codes of given by Theorem \ref{thm:typeB},
 with $f(x) = f_\theta(x)$ where
 \[ f_\theta(x) = (1+x+\dots + x^{d-1})^m (\cos^2 \theta + x \sin^2 \theta),\] 
 for $0\le \theta  \le \pi /2$,
 and any choice of the $q$-tuple of polynomials ${\bf p}(z)$ such that $\Delta(\{ {\bf p}(z) : z= 0,\dots, n\}) \ge 2t+1$.
 
Now $\sum_{z=0}^n f_z= f(1) = d^m ( \cos^2 \theta +  \sin^2 \theta) = d^m.$
 Since the orthonormal basis vectors $|k_L\>$ of Theorem \ref{thm:typeB} have a unit norm,
 this implies that 
$ \sum_{\substack{z=0,\dots , n\\ (z-k)/d \in \mathbb N}}   \sqrt{ f_z } = 
\frac{1}{d}\sum_{z=0}^n f_z$ for every $k=0,\dots d-1$, and hence $f_0 + f_d + f_{2d} + \dots  = d^{m-1}$. 
Hence the logical zero of our code can be written as 
\begin{align}
|0_{\theta}\>  = \sqrt{d^{-m+1}}
\sum_{\substack{z=0,\dots , m(d-1)\\  z /d \in \mathbb N}} 
\sqrt{f_{\theta,z}} |D_{{\bf p}(z)} \>\notag,
\end{align}
and the subscript in $|0_{\theta}\>$ makes explicit the dependence of the logical zero with the parameter $\theta$, and $f_{\theta}(x) = \sum_{z=0}^{m(d-1)} f_{\theta,z} x^z$.
For all values of $\theta$ and $\phi$ in $[0,\pi/2]$, $|0_{\theta}\>$ is orthogonal to 
$|k_{L}\>$ for all $k=1,\dots, d-1$. Hence it suffices to show that 
$0\le \<0_{\theta} |0_{\phi}\> < 1$ for all distinct values of $\theta$ and $\phi$ in $[0,\pi/2]$.

For distinct values of $\theta$ and $\phi$ in $[0,\pi/2]$, the values $x = \frac{ \cos^2 \theta}{d^{m-1}}$ and $y = \frac{\cos^2 \phi}{d^{m-1}}$ are distinct.
Note that
\begin{align}
\<0_{\theta} |0_{\phi}\>
&=
 d^{-m+1}  \sum_{z | d} \sqrt{f_{\theta,z} f_{\phi,z} } \notag\\
&=
  \sqrt {x y} +  d^{-m+1} \sum_{z | d, z \ge d} 
 \sqrt{f_{\theta,z} f_{\phi,z} }   \notag\\
 &\le 
  \sqrt {x y} +  \sqrt{(1-x)(1-y)} = 
  (\sqrt{x},\sqrt{1-x} ) \cdot (\sqrt y , \sqrt{1-y} )   \notag
\end{align} 
Since $0 \le x,y \le  1$, the above dot product is non-negative.
The vectors $ (\sqrt{x},\sqrt{1-x} ) $ and $ (\sqrt{y},\sqrt{1-y} ) $ both have unit norm, and hence the 
Cauchy-Schwarz inequality implies that their dot product is at most one.
Furthermore, the Cauchy-Schwarz inequality for the dot product between
$ (\sqrt{x},\sqrt{1-x} ) $ and $ (\sqrt{y},\sqrt{1-y} ) $ is a strict inequality since $x$ and $y$ are distinct.
Hence $0\le \<0_{\theta} |0_{\phi}\> < 1$, and this completes the proof. 
 \end{proof}

%%%%%
%%%%%%%%%%%%%%%%%%%%%%%%%%%%%% 
\section{Comparison with previous permutation-invariant codes} \label{sec:comparisons}
 %%%%%%%%%%%%%%%%%%%%%%%%%%%%%%%5
Previously constructed permutation-invariant codes have
been restricted to systems comprised of solely qubits, with Hilbert space $(\mathbb C^2)^{\otimes N}$.
Let $g, n, N$ be integers such that $g,n \ge 2t+1$ and $N \ge gn$.
Then the orthonormal basis vectors of the permutation-invariant codes that correct $t$ errors given by Ref.~\cite{ouyang2014permutation} generalizing the 9-qubit Ruskai code \cite{Rus00} have the orthonormal basis vectors
\begin{align}
|0_L\> &= \sum_{z =0,\dots,\floor{n/2} } \sqrt{\binom n {2z} 2^{-n+1}} 
|D_{(2gz,N-2gz)}\> \notag\\
|1_L\> &= \sum_{z =0,\dots,\floor{n-1/2} } \sqrt{\binom n {2z+1} 2^{-n+1}} 
|D_{(g(2z+1),N-g(2z+1))}\>. 
\label{eq:PI-codes}
\end{align}
In \cite{ouyang2014permutation}, the orthonormal basis vectors in Eq.~(\ref{eq:PI-codes}) are superpositions over Dicke states with amplitudes proportional to the square root of a binomial coefficient, where these Dicke states have weights spaced a constant number apart.
In \cite{ouyang2015permutation}, the authors proved the possibility of encoding more than a single qubit into a permutation invariant code with orthonormal basis vectors all of the form given by Eq.~(\ref{eq:PI-codes}).
However in this construction, the correction of only a single amplitude damping error is possible.

\section{Concluding remarks} \label{sec:conclude}
 In this paper, we construct permutation-invariant codes from certain polynomials, and thereby generalize the construction of the permutation-invariant codes that rely on a binomial distribution  \cite{ouyang2014permutation,Rus00} to those that rely on more general distributions.
From previous constructions of permutation-invariant codes, 
there is only a finite number of permutation-invariant-quantum codes of a fixed length;
here we show an uncountable number of permutation-invariant codes on $N$ qudits that correct $t$ errors and encode a $d$-level system exist, given that $N$ is sufficiently large.
It seems likely that the results in this paper can be combined with the technique of pasting permutation-invariant codes \cite{ouyang2015permutation} to construct other permutation-invariant codes with modest error correction capabilities.
However it remains a open problem to generalize the seven qubit permutation-invariant codes of Pollatsek and Ruskai \cite{PoR04}.

 Note that for the permutation-invariant codes in \cite{Rus00,ouyang2014permutation}, 
 the orthonormal basis vectors necessarily have amplitudes that are proportional to the square root of the binomial distribution, and the weights of the Dicke states are spaced an equal distance apart. These two properties need not hold in the permutation-invariant codes of Theorem \ref{thm:typeA}.

 \section{Acknowledgments}
 
Y.O acknowledges Tommaso Demarie for his comments on an earlier version of the manuscript.
This research was supported by the Singapore National Research Foundation under NRF Award No. NRF-NRFF2013-01.

\bibliography{../../../mybib}{}

\bibliographystyle{alpha}
\end{document}